\newtheorem{theorem}{Theorem}[section]
\newtheorem{lemma}[theorem]{Lemma}
\theoremstyle{definition}
\newtheorem{definition}[theorem]{Definition}
\theoremstyle{remark}
\numberwithin{equation}{section}
\newcommand{\be}{\begin{equation}}
\newcommand{\ee}{\end{equation}}
\newcommand{\ba}{\begin{eqnarray}}
\newcommand{\ea}{\end{eqnarray}}
\newcommand{\R}{\mathbb{R}}
\def\G{\mathbb G}
\def\C{\mathbb C}
\def\deg {\rm {deg}}
\def\eqref#1{(\ref{#1})}
\def\1{ \mathds{1} }
\DeclareOldFontCommand{\rm}{\normalfont\rmfamily}{\mathrm}
\DeclareOldFontCommand{\sf}{\normalfont\sffamily}{\mathsf}
\DeclareOldFontCommand{\tt}{\normalfont\ttfamily}{\mathtt}
\DeclareOldFontCommand{\bf}{\normalfont\bfseries}{\mathbf}
\DeclareOldFontCommand{\it}{\normalfont\itshape}{\mathit}
\DeclareOldFontCommand{\sl}{\normalfont\slshape}{\@nomath\sl}
\DeclareOldFontCommand{\sc}{\normalfont\scshape}{\@nomath\sc}
\begin{document}

\title{ Edge switching transformations of quantum graphs}
\author{M. Aizenman, H. Schanz, U. Smilansky, S. Warzel}
\affil{\small
MA:  Departments of Physics and Mathematics, Princeton University,
Princeton NJ 08540, USA\\ 
HS: Institute of Mechanical Engineering, University of Applied Sciences
Magdeburg-Stendal, D-39114 Magdeburg, Germany. \\
US:  Department of Physics of Complex Systems,
Weizmann Institute of Science, Rehovot 7610001, Israel\\ 

SW: Zentrum Mathematik, TU M\"unchen, 
Boltzmannstr. 3, 85747 Garching, Germany. }
%
\date{\small \today}
\maketitle
\begin{abstract}
Discussed here are  the  effects of basics graph transformations on the spectra of associated quantum graphs.  
In particular it is shown that under an edge switch the  spectrum of the transformed Schr\"odinger operator 
 is interlaced with that of the original one. 
  By implication, under edge swap the spectra before and after the transformation,  
 denoted by $\{ E_n\}_{n=1}^{\infty}$ and $\{\widetilde E_n\}_{n=1}^{\infty}$ correspondingly, are level-2 interlaced, so that $E_{n-2}\le \widetilde E_n\le E_{n+2}$.   The proofs are guided by considerations of  the quantum graphs' discrete analogs. 
\end{abstract}

\section{Statement of the main result}

Quantum graphs are linear, network-shaped structures of vertices connected by edges, with a Schr\"odinger like  operator suitably defined on functions supported on the edges.   Such systems were  studied by Linus Pauling as simplified models of valance electrons in organic molecules in the 1930s~\cite{Paul}.  
More recently attention was called to quantum graphs in the context of quantum chaos  \cite{kottos1,kottos2},   
and they are defined and discussed at great length and detail in review articles and books (see e.g., \cite{sven,kuch_08,graphs,Post}).   
In this note, we consider a general compact quantum graph,
$\G = (\mathcal V, \mathcal E,\mathcal L)$,   whose edges $e\in \mathcal E$ are metrized and of finite lengths belonging to the list  $ \mathcal L = \{L_e\}_{e=1}^{|\mathcal {E}|}$.
The associated Schr\"{o}dinger operators $H$ act in $L^2(\G):= \bigoplus_{e\in \mathcal E} L^2( e, dx)$, may include external potential $V$ and possibly also a magnetic potential $A$:
\be \label{eq:H}
H \ = \   \left (\frac 1 i  \frac d {dx} -A(x)  \right)^2 \ + \  V(x) \,.
\ee
The operator's definition is incomplete without specifying also the boundary conditions~(bc) at vertices.  They are assumed here to be local, i.e., expressed in terms of linear relations on the   limiting values of the function and of its derivatives along the ${{\deg}} (v)$ edges which meet at each vertex $v\in \mathcal V$.  The possible choices which ensure  self-adjointness  are reviewed in detail  in, e.g.,~\cite{kuch_08,graphs}.\\ 

Under mild conditions on $V$ and $A$ the spectrum of the Schr\"{o}dinger operator $\{ E_n\}_{n=1}^{\infty}$ is discrete and bounded below, but not above.   
It suffices to assume $V , A  $ are integrable, but for a more transparent presentation we focus on the case these are piecewise continuous~\cite{graphs}.  \\

This spectrum of $H$ is unaffected by the operation of \emph{edge splitting},   through the insertion of a  vertex of degree $2$ with Kirchhoff boundary conditions.  These require  $ \Psi $ to be  continuous at $ v \in \mathcal{V} $ having there directional derivatives satisfying: 
\be\label{eq:Kirchoff}
 \sum_{e \in \mathcal{E}_v}  \frac{d}{dx_e} \Psi(v) \  = \ 0 . 
 \ee 
(To avoid confusion:  these  boundary conditions are not assumed here for the other quantum graph vertices.)\\

Our purpose here is to discuss the effects on the spectrum of another basic graph transformation, that of  \emph{edge switching}.  It is  defined in the following extension of  a notion which is used in combinatorial graph theory \cite{Seidel}.\\ 

\begin{definition}  \label{def:switch}
For a quantum graph $\G$ with a self-adjoint Schr\"{o}dinger operator $H$, 
an  \emph{edge switch}  is a transformation in which 
a pair of edges  in $\mathcal E $ exchange the graph designations of one of their end points.  Preserved under this exchange are  the  edge lengths,  the local action of $H$ along the corresponding edges, and  the vertex boundary conditions --  up to the corresponding transposition of the functions whose limiting behavior at the two vertices enter the local boundary conditions.  
\end{definition}

Under an edge switch the  collection of the edge lengths remains unchanged. However the spectra will in the generic case be affected. 
For example, for quantum graph Laplacians it is shown in  \cite{gutkin}   that the topology of the quantum graph with rationally independent edge length  $\mathcal {L}$  "can be heard" in the sense of Marc Kac \cite {marckac}.  For clarity let us add that the graph topology may be affected by a switch, but it does not have to.  However, regardless of that, and even in case the switch involves two end points which terminate at a common vertex, the spectrum may be affected  through the resulting transposition in the   boundary conditions.   \\ 

Edge switch may  be combined with the actions of edge splitting and its inverse -- the removal of a degree $2$ vertex of Kirchhoff boundary conditions.   The group generated by these elements includes various other useful 
 graph modifications under which only the total length $\sum_{e\in \mathcal E} L_e$ is preserved.   Among those are: 
\begin{enumerate}[label=\alph*.]
\item An  \emph{edge crossing}:  a transformation in which a pair of edges $ e , e' $ are cut at 
 points $ s_{e} \in (0, L_e) $, $ s_{e'}  \in (0, L_{e'}) $,  and cross-rewired at these cut points, keeping the action of $ H $  locally unchanged along the edge segments.  The map  is done using vertex insertion, with Kirchhoff rules, followed by an edge switch at the inserted vertices, and then  removal of the added pair. 
\item  An  \emph{egde reversal}:  a mapping in which the parametrization of one of the graph's edges is reversed, i.e.,  the potentials $V$ and $A$ on $ e $ are replaced by
$V(L_e- \cdot )$ and $ A(L_e- \cdot )$, with no other change, in particular preserving the graph's structure, its edge lengths, and the collection of the boundary conditions  through which $H$ is defined.
\item An \emph{edge swap}:  a transformation in which  the lengths of the two edges and the action of $H$ along them are exchanged while preserving the graph structure and the boundary conditions at all vertices.
 \item An \emph{edge  segment exchange}: in which a pair of edges exchange mid segments.  The mapping is realizable through a pair of edge crossings.    
\end{enumerate} 
It may be noted that the graph topology, which is not preserved under a generic edge switch, is preserved under any of the last three  operations.  \\

The main result reported here is a  statement constraining the difference between the quantum graph's spectra before and after any of the above operations.  We use the following terminology.

\begin{definition} \label{def:intertwine}
Two  non-decreasing sequences $\{E_n\}$ and  $\{\widetilde E_n\}$ are said to be \emph{degree-$r$ interlaced}  if  for any $n>r$
\be \label{inter}
E_{n-r}\le \widetilde E_n \le E_{n+r} \,.
\ee
In case $r=1$ the sequences are more simply said to be interlaced. 
\end{definition}
The above is a symmetric relation.  Eq.~\eqref{inter}  may be equivalently  expressed  by saying that the spectral shift is uniformly bound by $r$, i.e.,  for all $E\in \R$ 
\be 
|\widetilde N(E) -N(E) | \, \leq  \, r \,    
\ee 
where $N$ and $\widetilde N$ are the  counting functions 
\be
N(E) \ = \  \rm{card}\{n \,| \, E_n\leq E \}\, , \quad \widetilde N(E) \ = \  \rm{card}\{n \,| \, \widetilde E_n\leq E\} \,.
\ee 

Following is our main result.

\begin{theorem}  \label {thm:r}  
Let  $(\G,H)$  be a quantum graph, which need not be connected, with a self-adjoint Schr\"{o}dinger operator of the form \eqref{eq:H}.  Then under any edge switch the spectra before and after the transformation are interlaced (at degree $r=1$).  
\end{theorem}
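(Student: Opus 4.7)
The plan is to compare $(\G,H)$ and $(\G',\widetilde H)$ at the level of their closed quadratic forms and then invoke the min-max principle. For a general local self-adjoint boundary condition at a vertex $v$ of degree $d_v$ the form-defining part can be encoded by an orthogonal projection $P_v$ on $\C^{d_v}$ together with a self-adjoint operator $L_v$ on $\mathrm{range}(P_v)$; the associated form is
\[
Q_H(\psi)=\int_\G \Bigl|\bigl(\tfrac{1}{i}\partial -A\bigr)\psi\Bigr|^2 + V |\psi|^2\, dx + \sum_{v\in\mathcal V} \langle P_v\Psi_v,L_v P_v\Psi_v\rangle
\]
on $\mathcal D(Q_H)=\{\psi \in \bigoplus_e H^1(e) : (I-P_v)\Psi_v=0\ \forall v\in\mathcal V\}$, where $\Psi_v\in\C^{d_v}$ collects the limiting values of $\psi$ along the edges meeting at $v$. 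By Definition~\ref{def:switch}, after the switch exchanging the $v_1$-side endpoint of $e_1$ with the $v_2$-side endpoint of $e_2$, the pairs $(P_v,L_v)$ at every vertex are retained; only the coordinates on which $P_{v_1},L_{v_1}$ and $P_{v_2},L_{v_2}$ act are reshuffled, the slot at $v_1$ previously storing $a:=\psi_{e_1}(v_1)$ now carrying $b:=\psi_{e_2}(\text{its new }v_1\text{-end})$, and symmetrically at $v_2$.

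The core step is to show that $\mathcal D(Q_H)$ and $\mathcal D(Q_{\widetilde H})$ differ by at most one scalar linear relation. Let $\vec\alpha$ and $\vec\beta$ be the boundary values at $v_1$ and $v_2$ of the edges not touched by the switch, and denote by $\mathbf f\in\C^{d_{v_i}}$ the basis vector associated with the affected slot. Subtracting the ``old'' from the ``new'' projection condition at $v_1$ yields $(a-b)\,(I-P_{v_1})\mathbf f=0$, and symmetrically $(a-b)\,(I-P_{v_2})\mathbf f=0$ at $v_2$. Thus either the boundary condition is insensitive to the switch ($\mathbf f\in\mathrm{range}(P_{v_1})\cap\mathrm{range}(P_{v_2})$, in which case $Q_H=Q_{\widetilde H}$ and the spectra coincide), or $\mathcal D(Q_H)\cap\mathcal D(Q_{\widetilde H})$ is exactly the subspace of either domain cut out by the single scalar equation $a=b$, hence has codimension one in each.

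It remains to verify that the two forms agree on this codimension-one common subspace. The bulk integral is switch-invariant, since $V,A$ and the edge lengths are preserved; the boundary terms $\langle P_{v_i}(a,\vec\alpha),L_{v_i}P_{v_i}(a,\vec\alpha)\rangle$ and $\langle P_{v_i}(b,\vec\alpha),L_{v_i}P_{v_i}(b,\vec\alpha)\rangle$ coincide as soon as $a=b$. The min-max principle then finishes the proof: given an $n$-dimensional subspace $S\subset \mathcal D(Q_{\widetilde H})$ realising $\widetilde E_n$, its intersection with the codimension-one subspace $\{a=b\}$ has dimension $\geq n-1$, is contained in $\mathcal D(Q_H)$, and $Q_H=Q_{\widetilde H}$ on it, so $E_{n-1}\le\widetilde E_n$; exchanging the roles of $H$ and $\widetilde H$ yields $\widetilde E_{n-1}\le E_n$, i.e.\ $\widetilde E_n\le E_{n+1}$.

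The substantive content, and the step requiring care, is the second one: a naive count suggests codimension two, since an edge switch perturbs the boundary data at two distinct vertices. The key is that the two ``missing'' linear relations coming from $v_1$ and $v_2$ both collapse to the same scalar equation $a=b$; this coincidence is what upgrades a ``degree-$2$'' estimate to the sharp degree-$r=1$ interlacing asserted in Theorem~\ref{thm:r}.
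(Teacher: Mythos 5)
Your proof is correct, but it takes a genuinely different route from the paper's. The paper never touches quadratic forms: it first realizes the switch as an $\varepsilon\downarrow 0$ norm-resolvent limit of edge crossings, decouples the graph by Dirichlet conditions at the inserted vertices, and expresses the spectral shift through Krein's resolvent formula and the boundary values of the $4\times 4$ Weyl function via $\xi(E;H,H_0)=N(0;M(E))$ for a.e.\ $E$ (Behrndt--Malamud--Neidhardt); the crossing conjugates $M(E)$ by a transposition $S$, and since $M(E)-SM(E)S$ has rank $2$ and is antisymmetric under $S$, the reflection principle of Lemma~\ref{lem:reflection} bounds the shift by $\mathrm{rank}/2=1$. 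Your variational argument replaces this machinery by the elementary observation that the two forms coincide on the single trace hyperplane $\{a=b\}$, and min-max then gives $E_{n-1}\le \widetilde E_n\le E_{n+1}$ directly at the level of eigenvalues. What your route buys: no crossing approximation, no a.e.-in-$E$ caveats, no boundary-triple technology, and the ``two constraints collapse to one scalar relation'' mechanism is laid bare --- it is the exact form analogue of the paper's rank-$2$ antisymmetric difference. What it costs: you must invoke the full $(P_v,L_v)$ form-parametrization of arbitrary local self-adjoint vertex conditions (valid here, since on a compact graph all such conditions yield semibounded operators with form domain of the stated type), a point worth citing explicitly. Two small repairs are needed. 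First, in your degenerate case $\mathbf f\in\mathrm{range}(P_{v_1})\cap\mathrm{range}(P_{v_2})$ the conclusion ``$Q_H=Q_{\widetilde H}$ and the spectra coincide'' is wrong in general: the \emph{domain} constraint is insensitive to the switch, but the Robin terms $\langle P_{v_i}\Psi_{v_i},L_{v_i}P_{v_i}\Psi_{v_i}\rangle$ still change whenever $L_{v_i}$ acts nontrivially on the affected slot. Fortunately your main step never uses this claim: the min-max argument only needs that $\{a=b\}$ has codimension at most one in each form domain, is contained in both, and equalizes the two forms --- which holds in all cases, so the case split can simply be deleted. Second, you should cover the switch in which both affected endpoints meet a common vertex $v$ (a case the paper explicitly flags as spectrally nontrivial): there the two slots of $\Psi_v$ are transposed, the subtraction gives the single condition $(a-b)(I-P_v)(\mathbf f_1-\mathbf f_2)=0$, and on $\{a=b\}$ the boundary vectors agree outright, so your argument goes through verbatim.
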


An immediate consequence is that under each of the other transformations mentioned above the spectra before and after the transformation  are degree-$r$ interlaced, with
\be \label{eq:r}
r \ = \ \begin{cases} 
         1 & \mbox{under a single edge crossing}  \\ 
         1 & \mbox{edge reversal}  \\ 
         2 & \mbox{edge  swap}  \\ 
         2 & \mbox{partial exchange of edge segments}  \\ 
                     \end{cases} 
\ee 
Further implications, and related research questions,  are discussed at the end of the article.\\ 

\begin{figure}[ht]
\centerline{\includegraphics[width= .3\textwidth, angle = -90]{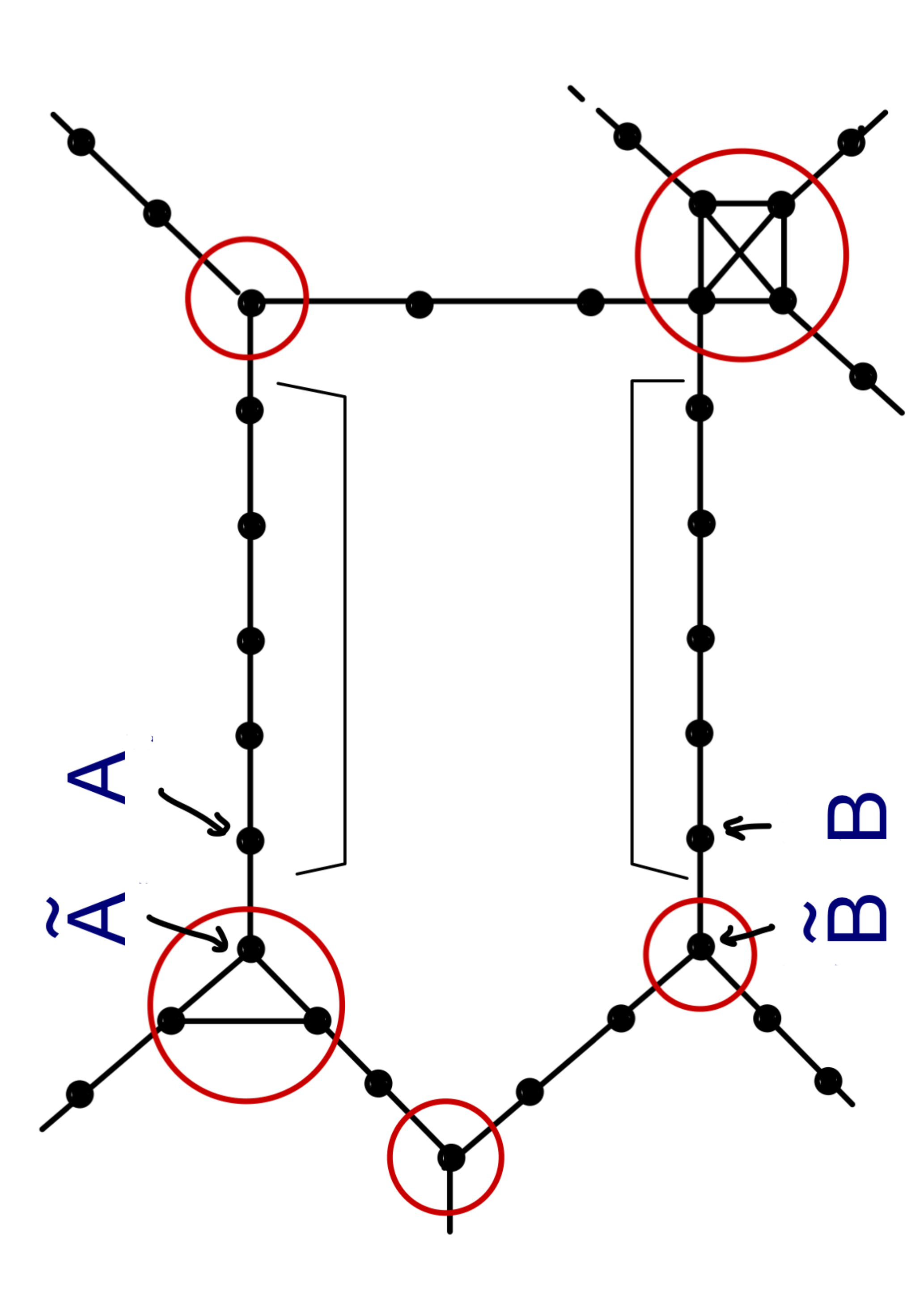}}
\caption{Piece of a discrete version of a quantum graph.  The quantum graph's vertices, marked here by circles, correspond to connected clusters of discrete vertices of degrees other than 2.  Quantum graph's edges correspond to chains of discrete vertices of degree 2.}
\label{fig:discrete}
\end{figure}

The notions discussed above can also be considered in the context of finite discrete graphs, an  example of which is depicted in Fig.~\ref{fig:discrete}.  The discrete Hamiltonian $H$ is given by a hermitian matrix whose off diagonal terms are of the form
\be 
H_{u,v}   \ = \   J_{u,v} e^{i \theta(u,v)}
\ee  
with $J_{u,v}$ the adjacency matrix of the graph plus an additional potential $ V $ on the diagonal $ u = v $ and $\theta(u,v) = - \theta(v,u)$ representing the single step integral of the vector potential $A$.  The role which for quantum graphs is played by edges is assumed here  by chains of vertices of $\deg (v)=2$.   The end points of any chain are its extremal points of degree 2.  
The discrete analog of the quantum graph's vertices are therefore the  connected clusters of points with $\deg(v) \neq 2$.    

Using  the above as a dictionary, the different notions of graph transformations have a natural extension to the discrete graph case, and so does Theorem~\ref{thm:r}.   

\section{A pair of relevant finite-rank perturbation principles} 

It is somewhat instructive to consider first the discrete version, which is what we shall do.  
As a preliminary observation, let us note that for any edge switch $S=S^*$  the difference $H- SHS$ is an operator of rank at most $4$.    E.g.,  the switch  $S_{A,B}$  of the two edges highlighted in Fig.~\ref{fig:discrete}, affect only the matrix elements of $H$ within the $4$ dimensional block spanned by the vectors 
$ \{ | A \rangle, \, | \widetilde A \rangle,\,  | B \rangle, \, | \widetilde  B\rangle\}$.   

By the following finite-rank perturbations principle   (cf.~\cite{S_rank1})
any low rank perturbation has only a limited effect  on the spectral counting function 
\be 
N(E;H) \ := \  \rm{dim \, Range}\, \, P_{H< E}  \, .
\ee  
 
\begin{lemma}  
Let $(H,K)$ be a pair of self adjoint operators, with $H$ 
of discrete spectrum which is bounded below  
 and  $K $ of a finite rank.   Then  for any $ E \in \mathbb{R} $: 
\be
  | N(E; H) - N(E;H+K)|  \leq   \rm {Rank} \, K    \, . 
\ee   
\end{lemma}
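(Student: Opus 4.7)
The plan is to establish the two bounds $N(E;H)-N(E;H+K)\le r$ and $N(E;H+K)-N(E;H)\le r$, with $r:=\text{Rank}\,K$, via the min--max (Courant--Fischer) characterization of the counting function. Since a symmetric operator with finite-dimensional range is bounded, $H$ and $H+K$ have coinciding form domains, so a test subspace suitable for one operator is automatically admissible for the other.

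First I would record the variational identity available for $H$ self-adjoint, bounded below, with purely discrete spectrum:
\be
N(E;H) \ = \ \sup\bigl\{ \dim V \ : \ \langle\psi,H\psi\rangle < E\,\|\psi\|^2 \ \text{ for every } \psi\in V\setminus\{0\} \bigr\},
\ee
where the supremum runs over finite-dimensional subspaces $V$ of the form domain. Setting $n:=N(E;H)$, pick such an $n$-dimensional $V$. The key move is to intersect it with the kernel of the perturbation: $\mathcal{K}:=\ker K$ is a closed subspace of codimension $r$, so by a standard dimension count $\dim(V\cap\mathcal{K}) \ge n-r$.

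On $V\cap\mathcal{K}$ the perturbation annihilates every vector, and therefore
\be
\langle\psi,(H+K)\psi\rangle \ = \ \langle\psi,H\psi\rangle \ < \ E\,\|\psi\|^2 \qquad \text{for all } \psi\in (V\cap\mathcal{K})\setminus\{0\}.
\ee
Applying the same variational characterization now to $H+K$, the subspace $V\cap\mathcal{K}$ certifies $N(E;H+K) \ge n-r$, i.e., $N(E;H)-N(E;H+K) \le r$. The reverse inequality follows verbatim upon interchanging $H\leftrightarrow H+K$ and $K\leftrightarrow -K$, noting $\text{Rank}(-K) = r$. Combining yields the claim.

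No essential obstacle is encountered: the entire proof is an application of the min--max principle organized around $\ker K$. The only technical point worth stating is that the intersected subspace $V\cap\mathcal{K}$ still lies in the form domain of $H+K$, which is automatic because $K$ is bounded and hence the form domains of $H$ and $H+K$ coincide.
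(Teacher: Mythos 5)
Your argument is correct as written: the Glazman/min--max characterization you quote is valid for semibounded self-adjoint operators with the counting convention $N(E;H)=\dim\,\mathrm{Ran}\,P_{H<E}$ used in the paper; a self-adjoint finite-rank operator is automatically bounded, so $H+K$ is again semibounded with purely discrete spectrum and the same form domain; and the dimension count $\dim(V\cap\ker K)\ge \dim V-\mathrm{Rank}\,K$ is sound since $\ker K$ is closed of codimension $\mathrm{Rank}\,K$. You should know, though, that this is a genuinely different route from the paper's. The paper offers no written proof of this lemma; it points to Simon's rank-one perturbation survey, and the argument intended there --- and used explicitly in the paper's proof of Lemma~\ref{lem:reflection} --- is iterative: decompose $K=\sum_{j=1}^{r}\lambda_j|\Psi_j\rangle\langle\Psi_j|$ and add the rank-one summands one at a time, each step shifting $N(E)$ by at most one by rank-one interlacing. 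Your proof instead disposes of $K$ in one shot by intersecting a maximal test subspace with $\ker K$. Each approach buys something: the iterative route meshes with the Krein spectral shift formalism the paper invokes and, by tracking the signs of the $\lambda_j$, immediately yields the signed refinement that powers the factor $\tfrac12\,\mathrm{Rank}(K)$ in Lemma~\ref{lem:reflection}. Your variational argument is more elementary and self-contained, and it recovers that same refinement with one extra line: replacing $\ker K$ by the spectral subspace of $K$ on which $K\le 0$ (codimension $n_+$, the number of strictly positive eigenvalues of $K$) gives $\langle\psi,(H+K)\psi\rangle\le\langle\psi,H\psi\rangle<E\|\psi\|^2$ there, hence $N(E;H)-N(E;H+K)\le n_+$, and symmetrically $N(E;H+K)-N(E;H)\le n_-$; with that remark your method proves Lemma~\ref{lem:reflection} directly as well.
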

The quantity $\xi(E;H,H') := N(E; H)-N(E;H')  $ is often referred to as the Krein spectral shift. 

This already allows to deduce (initially for the discrete case) 
a weaker version of Theorem~\ref{thm:r}, with \eqref{eq:r} replaced by  $4$ times  as large spectral shift bounds.\\ 

One can do a bit better using the monotonicity of the spectral function under positive perturbations, and the fact that $H$ and $S_{A,B}H S_{A,B}$ agree on the subspace of functions which vanish at $A$ and $B$. 
For that, one may consider the two families of operators
\begin{eqnarray}  
H_\lambda \,  :=&\  H\qquad  +  & \lambda \ | A\rangle \langle A | + \lambda \ | B\rangle \langle B |  \nonumber \\       
\widetilde H_\lambda  \, :=&   S_{A,B}H S_{A,B} \,   + & \lambda \ | A\rangle \langle \  A | + \lambda \ | B\rangle \langle B | 
\end{eqnarray}  
 
 The key observation now is that as $\lambda$ increases from $0$ to $+\infty$ the spectral counting functions $N(E,H_\lambda)$ and $N(E,\widetilde H_\lambda)$  can only decrease, but by not more than $2$,     and for each $E<\infty$    
 \be
 \lim_{\lambda \to \infty} N(E,H_\lambda) \ = \ \lim_{\lambda \to \infty}  N(E,\widetilde H_\lambda) 
 \ee 
This allows to deduce that
\be 
| N(E,H) - N(E,S_{A,B}H S_{A,B} ) |\, \equiv \,  | N(E,H_0) - N(E,\widetilde H_0 ) | \ \leq \, 2 \,,
\ee  
 which is a step closer to our claim.  \\

For  the tighter bound which is stated in Theorem~\ref{thm:r} we shall  use  the following comparison  principle. 

\begin{lemma} \label{lem:reflection} 
For any pair of self adjoint operators $H_0$ and $K$, with  $K$ of finite rank, and $T$ a unitary  matrix  such that $T^2=\1$ (equivalently $T^*=T$) the spectra of $H+K$ and $H+TKT$ are intertwined to degree 
$ \rm{rank}(K) /2 $, i.e. for all $E \in \R$
\be 
|\xi(E; H_0 +TKT, H_0+K)| \ \leq \ \tfrac 12 \  \rm{Rank} (K)   \, . 
\ee 
\end{lemma}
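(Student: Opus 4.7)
The plan is to decompose $K$ according to its symmetry under $T$ and then exploit monotonicity of the counting function under signed perturbations of controlled rank. I would write $K = K_s + K_a$ with $K_s := \tfrac{1}{2}(K + TKT)$ and $K_a := \tfrac{1}{2}(K - TKT)$, so that $T K_s T = K_s$ and $T K_a T = -K_a$. The $T$-symmetric part $K_s$ appears identically in both $H_0+K$ and $H_0+TKT$, so I would absorb it into $\widehat H_0 := H_0 + K_s$ and reduce the statement to bounding $|\xi(E;\widehat H_0 - K_a,\widehat H_0 + K_a)|$.

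Next, I would exploit the antisymmetry $T K_a T = -K_a$ to force a pairing of the non-zero spectrum of $K_a$: if $K_a\phi = \mu\phi$ with $\mu \neq 0$, then $K_a(T\phi) = -\mu(T\phi)$, and $T\phi$ is linearly independent of $\phi$ (else $\phi$ would be a $\pm 1$ eigenvector of $T$, forcing $\mu = 0$). Hence the non-zero eigenvalues come in matched $\pm\mu_i$ pairs, so $\rm{rank}(K_a) = 2m$ is even, and the spectral decomposition of $K_a$ furnishes a splitting $K_a = K_+ - K_-$ with $K_\pm \geq 0$ and $\rm{rank}(K_\pm) = m$.

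For the core inequality I would connect $\widehat H_0 - K_a$ to $\widehat H_0 + K_a$ via the intermediate operator $\widehat H_0 + K_+ + K_- = (\widehat H_0 - K_a) + 2K_+ = (\widehat H_0 + K_a) + 2K_-$. The preceding finite-rank perturbation lemma, together with monotonicity of $N(E;\cdot)$ under positive perturbations, gives that adding the positive rank-$m$ operator $2K_+$ only decreases $N(E)$ and by at most $m$, while then removing the positive rank-$m$ operator $2K_-$ only increases $N(E)$ and by at most $m$. Because these two shifts move $N(E)$ in opposite directions they partially cancel, and the net change is bounded by $m$, yielding $|\xi(E;\widehat H_0 + K_a,\widehat H_0 - K_a)| \leq m = \tfrac{1}{2}\rm{rank}(K_a)$. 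In the regime relevant for the main theorem, $T$ is chosen so that $TKT = -K$, hence $K_a = K$ and the bound becomes $\tfrac{1}{2}\rm{rank}(K)$ as claimed.

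The main obstacle is securing the factor-of-two cancellation in the third step. The perturbation $2K_a$ has rank $2m$, and a direct application of the preceding lemma would give only the weaker bound $2m$. The savings come from the pairing forced in the second step, which guarantees $\rm{rank}(K_+) = \rm{rank}(K_-) = m$, so that the positive and negative contributions to $N(E)$ offset rather than accumulate. Without the antisymmetry $T K_a T = -K_a$ this pairing would fail, which is why the preliminary splitting $K = K_s + K_a$ is the essential structural step of the proof.
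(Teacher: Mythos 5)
Your proof is correct, and at its core it is the same argument as the paper's: the paper works directly with $\Delta H := TKT - K$, which in your notation is just $-2K_a$ (so your symmetric/antisymmetric splitting of $K$ is implicit there, the $T$-symmetric part cancelling automatically), observes that $T\,\Delta H\, T = -\Delta H$ forces equally many strictly positive and strictly negative eigenvalues, and then adds the rank-one spectral pieces of $\Delta H$ one at a time, the positive and negative steps moving $N(E)$ in opposite directions. That is precisely your $K_+$/$K_-$ cancellation; your two monotone rank-$m$ steps through the intermediate operator $\widehat H_0 + K_+ + K_-$ are an equivalent repackaging of the paper's $2m$ rank-one steps. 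The one substantive difference is that your rank bookkeeping is more careful than the paper's: the paper asserts without argument that $\Delta H$ has at most $\mathrm{rank}(K)/2$ eigenvalues of each sign, whereas in general $\mathrm{rank}(\Delta H) = \mathrm{rank}(2K_a)$ can be as large as $2\,\mathrm{rank}(K)$ (take $K = |e_1\rangle\langle e_1|$ and $T$ the transposition $e_1 \leftrightarrow e_2$; a spectral shift of $1 > \tfrac12$ can then occur for generic $H_0$, so the lemma as literally stated needs a proviso). The correct general bound, which your argument delivers, is $\tfrac12\,\mathrm{rank}(TKT - K)$, and your closing remark that the stated constant $\tfrac12\,\mathrm{rank}(K)$ holds when $TKT = -K$ identifies exactly the regime in which the lemma is invoked in the paper: in both the discrete and the continuum proofs the difference of the two operators is rank two and $S$-antisymmetric, so one may take $K$ itself $S$-antisymmetric of rank two, yielding the shift bound of one.
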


\begin{proof}
The relation between the two operators can be rewritten as 
\be 
H_0 + TKT \ = \    (H_0 + K)  +  \Delta H\, , 
\ee 
with $\Delta H := T^* K T- K  $.   
The key observation now is that 
\be 
 T^* \Delta H T \ = \  -  \Delta H  \, .
 \ee 
Hence, in the spectral decomposition 
\be 
 \Delta H \ = \  \sum_{j=1}^{\rm{rank}(K)}  \lambda_j | \Psi_j\rangle \langle \Psi_j| 
\ee 
the number of strictly positive eigenvalues is at most $\rm{rank}(K)/2$, and so is the number of strictly negative eigenvalues.   (In the present case the two are equal, due to the antisymmetry, but it is the above property which matters.)

Therefore adding $ \Delta H  $ is equivalent to the addition, one at a time, of $\rm{rank}(K)$ operators of rank $1$.   In this process, for each $E$  the spectral counting function $N(E)$ changes by at most $\rm{rank}(K)/2$ steps of $+1$  and at most  $\rm{rank}(B)/2$ steps of $-1$.   Thus the net spectral shift is bounded by  $\rm{rank}(K)/2$. 
\end{proof}

\section{The discrete graph case} 

The argument which will be employed for the optimal uniform spectral shift bound (see Fig.~\ref{fig3}) starts  from a split of the graph at the two points $A$ and $B$, in a manner indicated in Fig.~\ref{fig:split}.  
\begin{figure}[h]
\centerline{\includegraphics[width= .4\textwidth, angle = -90]{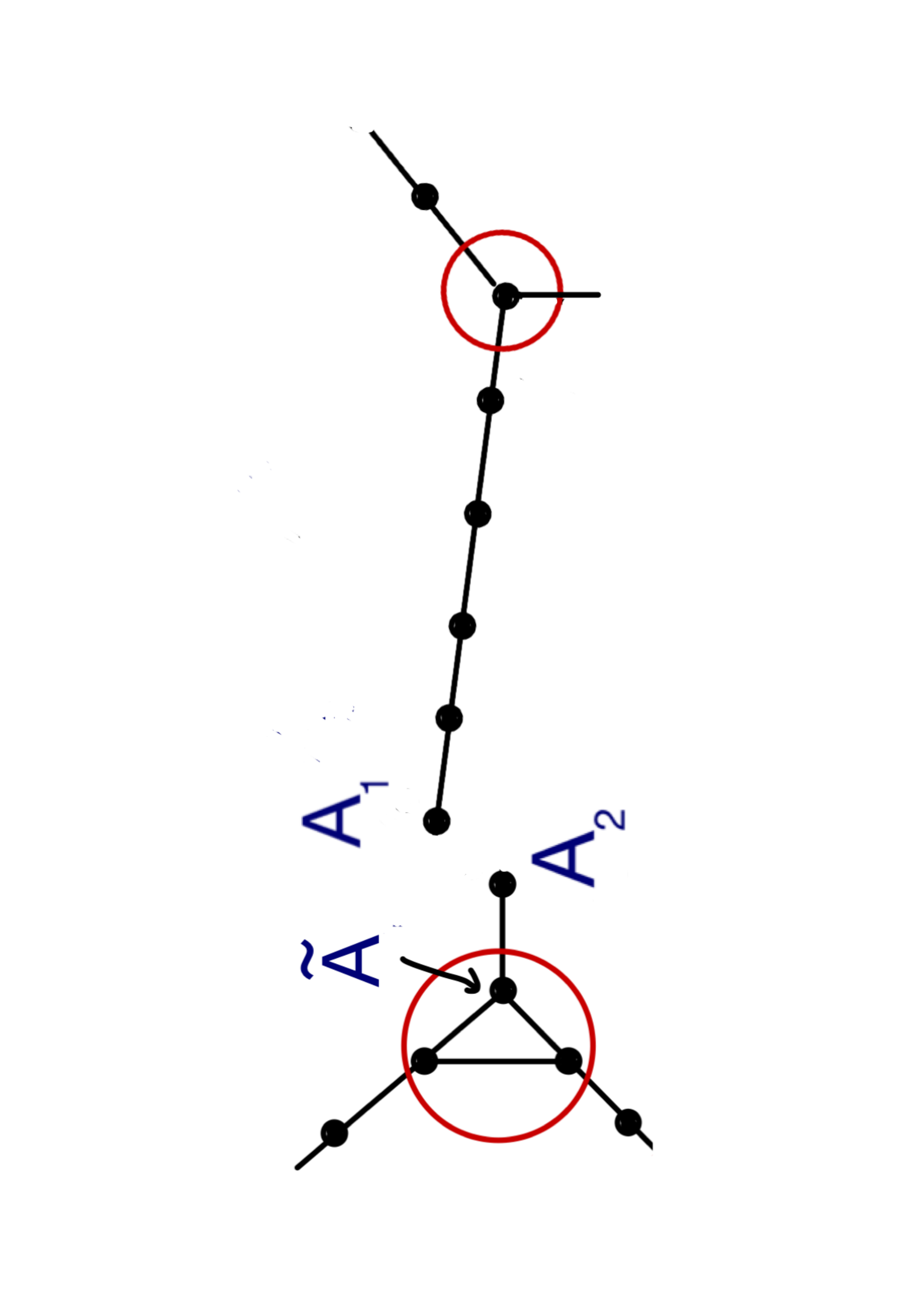}}
\caption{The vertex split $A \mapsto (A_1,A_2)$ which is employed in the proof of the discrete version of Theorem~\ref{thm:r}.  
A similar split is made at point $B$ of Fig.~\ref{fig:discrete}}
\label{fig:split}
\end{figure}
\begin{proof}[Proof of Theorem~\ref{thm:r} -- the discrete case]
Consider the discrete analogue of the quantum graph which is obtained by splitting the end points of the edges which are to be switched, in a manner indicated in Fig.~\ref{fig:split}.   Denoting by $\mathcal H $ the original finite-dimensional Hilbert space and by  $\widehat {\mathcal H}$ the one  corresponding to the enlarged graph. 
The latter is isomorphic to the direct sum $\mathcal H \bigoplus  \C^2  $.  
To the  operator $H$ on  $\mathcal H $, we associate  the  operator $\widehat H$  on  $\widehat {\mathcal H}$ through the following choices: i)  The two operators coincide within the subspace which does not involve the end points $A$ and $B$, ii) The matrix element of $ \widehat H $ between  $|A_1\rangle $ and its neighbor along the edge equals $ \sqrt{2}  $ times that of $ H $ between $A $ and that neighbor.  iii) The matrix element of  $ \widehat H $ between $| A_2\rangle $ and the vertex site
is again the product of $ \sqrt{2} $ and the corresponding matrix element of $ \widehat H $. And finally iv)  $V_A := \langle A_1|  \widehat H |A_1\rangle  =  \langle A_2|  \widehat H |A_2\rangle  =  \langle A|  H |A \rangle $.
Similar convention will be applied at   the endpoint $| B \rangle $ with which $| A \rangle $ is to be switched. 
Consider now the one parameter family of operators 
 \be 
 H_\lambda \ =\  \widehat H + \lambda \left[  
    (  | A_1\rangle -| A_2\rangle) ( \langle A_1 |  - \langle A_2 |)  + 
       (  | B_1\rangle -| B_2\rangle) ( \langle B_1 |  - \langle B_2 |) 
 \right]  \,.  
 \ee 
Denoting by  $S=S^*$ the operator which transposes  $|A_2\rangle  \leftrightarrow | B_2\rangle $, we find: 
\be 
 S^*H_\lambda S  -  H_\lambda  \ = \  \lambda \ (  | A_1\rangle -| B_1 \rangle) ( \langle A_2 |  - \langle B_2 |)  + 
      \lambda \ (  | A_2\rangle -| B_2\rangle) ( \langle A_1 |  - \langle B_1 |)
\ee     
This difference is rank $ 2 $ and antisymmetric with respect to the transposition $ S $. Lemma~\ref{lem:reflection} thus implies that 
\be
|\xi(E; H_\lambda,  S^*H_\lambda S )| = \left| N(E; H_\lambda) - N(E; S^*H_\lambda S) \right| \leq 1 \, . 
\ee
By an orthogonal change of basis in the subspace $ \{ | A_1\rangle , | A_2\rangle , | B_1\rangle , | B_2\rangle \} $ to 
$ \{ \frac{1}{\sqrt{2}} (  | A_1\rangle + | A_2\rangle) ,  \frac{1}{\sqrt{2}}  (  | A_1\rangle -| A_2\rangle) ,  \frac{1}{\sqrt{2}} (  | B_1\rangle + | B_2\rangle) ,  \frac{1}{\sqrt{2}}  (  | B_1\rangle -| B_2\rangle) $, one realizes that $ H_\lambda $ on $\widehat {\mathcal H} $ is unitarily equivalent to  $ H \oplus {\rm diag}(2\lambda+V_A,2\lambda+V_B) + X $ on a space equivalent to $ {\mathcal H} \oplus \C^2 $ with $ X $ 
a $\lambda$- independent rank-$4$ operator which  is purely off diagonal in the above $\rm{dim} \mathcal {H} \times 2$ block decomposition of $\widehat {\mathcal H} $.  I.e., 
denoting by $ Q $ the orthogonal projection onto the $ \C^2 $-component of $\widehat {\mathcal H} $ and by $ P = \1 - Q $, one has $ PX P = Q X Q = 0 $. Consequently through an application of the Schur complement formula, for any $ E \in \mathbb{R} $:
\be
\lim_{\lambda\to\infty} N(E;H_\lambda) = N(E;H)
\ee
and the same applies to $ S^*H_\lambda S $. Hence, the spectral shift under edge switch is bounded by one as claim. 
\end{proof}

\section{Proof of the main result}

While Theorem~\ref{thm:r} addresses the spectral shift caused by an edge switch, it is convenient to regard this transformation  as a limiting case of an edge crossing.   More specifically, let $ e \equiv [0,L_e] $ and $ e'  \equiv [0,L_{e'}] $ be a pair of edges whose end points are to be switched. Splitting the edges through the insertion of vertices  located at a distance $ \varepsilon > 0 $ from the corresponding endpoints, with the  Kirchhoff boundary conditions on the inserted vertices leaves the spectrum of Hamiltonian $ H $ invariant. Crossing the edges at the split points results in a new Hamiltonian which we denote by $ H_\varepsilon $. As $ \varepsilon \downarrow 0 $ this Hamiltonian converges in norm resolvent sense to the Hamiltonian of the edge switch dealt with in Theorem~\ref{thm:r}. 
This implies the convergence of the counting functions $ N(E;H_\varepsilon) $ for almost all $ E \in \mathbb{R} $. It therefore suffices to focus in the proof on the spectral shift under an edge crossing transformation.

\begin{proof}[Proof of Theorem~\ref{thm:r} -- continuum  case, for  edge crossing] \mbox{} \\   
With Kirchhoff  boundary conditions at the above described edge insertion sites on the chosen pair of edges,  the insertion in essense leaves the Hamiltonian invariant.  More precisely, it causes no change in the spectrum.  However, changing the boundary conditions there to Dirichlet  one obtains a different  Hamiltonian which we denote by $ H_0 $. 
As a tool in the proof we will work with the spectral shift resulting from this change: 
\be
\xi(E;H,H_0) = N(E;H) - N(E;H_0) 
\ee
where $ N(H;E) $ stands for the number of eigenvalues of $ H $ strictly below $ E $.

Let $ \widetilde H $ stand for the Hamiltonian in which the two edges are crossed at the two inserted vertices. By the additivity of the spectral shift, the quantity of interest can therefore be written as
\be
\xi(E;H,\widetilde H) \ = \  \xi(E;H,H_0) - \xi(E;\widetilde H,H_0) \, . 
\ee
Since we assumed the quantum graph to be compact, the spectral shift functions are piecewise constant with only countably many discontinuities. It therefore remains to bound the spectral shift only for almost every $ E \in \mathbb{R} $. 

The resolvents of the original Hamiltonian $ H $ and the Dirichlet-decoupled operator $ H_0 $ are related by 
the Krein formula 
\be
(H-z)^{-1} -  (H_0-z)^{-1} =  - \gamma(z) M(z)^{-1} \gamma(z^*)^*
\ee
which involves the gamma field $\gamma $ and the Weyl function $ M $ corresponding to this change of boundary conditions~\cite{AlbPush05}. The latter is a $ 4\times 4 $ matrix-valued Herglotz function $ z \mapsto M(z) $ of the spectral parameter $ z \in \mathbb{C}^+ $. The Weyl function's boundary values $ M(E) := \lim_{\varepsilon \downarrow 0} M(E+i\varepsilon) $ exist for almost all $ E \in \mathbb{R} $.  Since the graph is compact, these boundary values are self-adjoint $ M(E)^* = M(E) $, 
We now recall the result of Behrndt-Malamud-Neidhardt~\cite[Thm.~4.1]{BMN08} that for almost all $ E \in \mathbb{R} $ the spectral shift is given by
\be
\xi(E;H,H_0)  \ = \  \frac{1}{\pi} \rm{Arg} \det M(E) = N(0; M(E)) 
\ee
where $ \rm{Arg}: \mathbb{C}\backslash \{0\} \to  (-\pi,\pi] $ is the imaginary part of the principle value of the complex logarithm.   Likewise, the spectral shift for the edge-crossed Hamiltonian $ \widetilde H $ is given in terms of its Weyl function $ \widetilde M(z) $. The latter is related to $ M(z) $ by a unitary transposition $S= S^* $ which swaps two rows/columns,
\be
 \widetilde M(z) = S \ M(z) S\, . 
\ee
Consequently, for almost every $ E \in \mathbb{R} $
\be 
\xi(E;H,\widetilde H) = N( 0; M(E))  - N(0; S M(E) S ) \, . 
\ee
Since the difference $  M(E) - S M(E) S $ is rank two and antisymmetric with respect to $ S $,  Lemma~\ref{lem:reflection} applies, and implies that the spectral shift is bounded by one. 
\end{proof}

\bigskip

\section{Discussion}

A few  comments are in order.\\

 -  A common feature of the transformations considered here is that they do not affect the asymptotic rate of growth of the spectrum.   This follows from  Weyl's semi-classical assymptotic formula \cite{Weyl11} (see also \cite{LT}):
\begin{eqnarray}
\lim_{E\to \infty}  \frac {1}{ \sqrt{E}}  N(H,E) &=&
  \lim_{E\to \infty}
\frac {1}{ \sqrt{E}}  \ \frac {1}{2\pi} \sum_{e\in \mathcal E(\G)}  \int_0^{L_e} dx  \int _{-\infty} ^{\infty} \1[ p^2 + V(x) \leq E]  \, dp \vspace{3cm}  \nonumber
\\ 
&  = &   \frac {1}{\pi} \sum_{e\in \mathcal E(\G)} |L_e| \end{eqnarray}
and the fact that the sum  of edge lengths is preserved under the transformations considered here. \\

-  Given a quantum graph $\G$ and a list $\mathcal L$ of $|\mathcal E|$ distinct lengths
there are  $|\mathcal E|\, !$ different arrangements of the lengths over the edges.
Denote  by $\mathcal{Q} = \mathcal{Q}_{\mathcal L }$ the set of the corresponding quantum graphs, at zero potential but  with some fixed boundary conditions.
This
provides an ensemble of quantum graphs with spectra of the same mean spectral density and a common structure of periodic orbits.   What would be the spectral statistics for this ensemble? How will it depend on the underlying common structure?\\

  -  For pairs of quantum graphs    $q, \tilde q  \in \mathcal{Q}$
we can define the (discrete) distance $\Delta(q,\tilde q )$  as the smallest number of elementary swaps by which  $q$ may be transformed to $\tilde q  $.
Thus we can define   an adjacency relation on $\mathcal{Q}$ between quantum graphs at distance $\Delta = 1 $.  This  in turn defines a d-regular "meta-graph" $G_{\mathcal{Q}}$ with the $|\mathcal E|\, !$ quantum graphs as vertices, each with the degree $d = |\mathcal E| (|\mathcal E|-1)/2$.   A random walk on $G_{\mathcal{Q}}$ corresponds to a succession of elementary swaps which are chosen at random with equal probability at each step. Since $G_{\mathcal{Q}}$ is d-regular, the random walk is ergodic and covers  $G_{\mathcal{Q}}$ uniformly after sufficiently long "time". This process is the discrete version of  Dyson's Brownian motion approach to spectral statistics  \cite {Dyson}, and it is discussed at length in \cite{Bernoulli}  for a different class of matrix ensembles. \\

- A related subject for future research is  the systematic study of correlations between different spectra and their dependence on the distance  $\Delta (q,\tilde q )$ between the graphs.\\

-  A possible tool for further insights on the spectral shift, and also an object of intrinsic interest  (cf.~\cite{BB}) is the question how are nodal counts  affected by edge swapping and the other transformations discussed here.\\

\begin{figure}[ht]
\centerline{
\includegraphics[width= .45\textwidth]{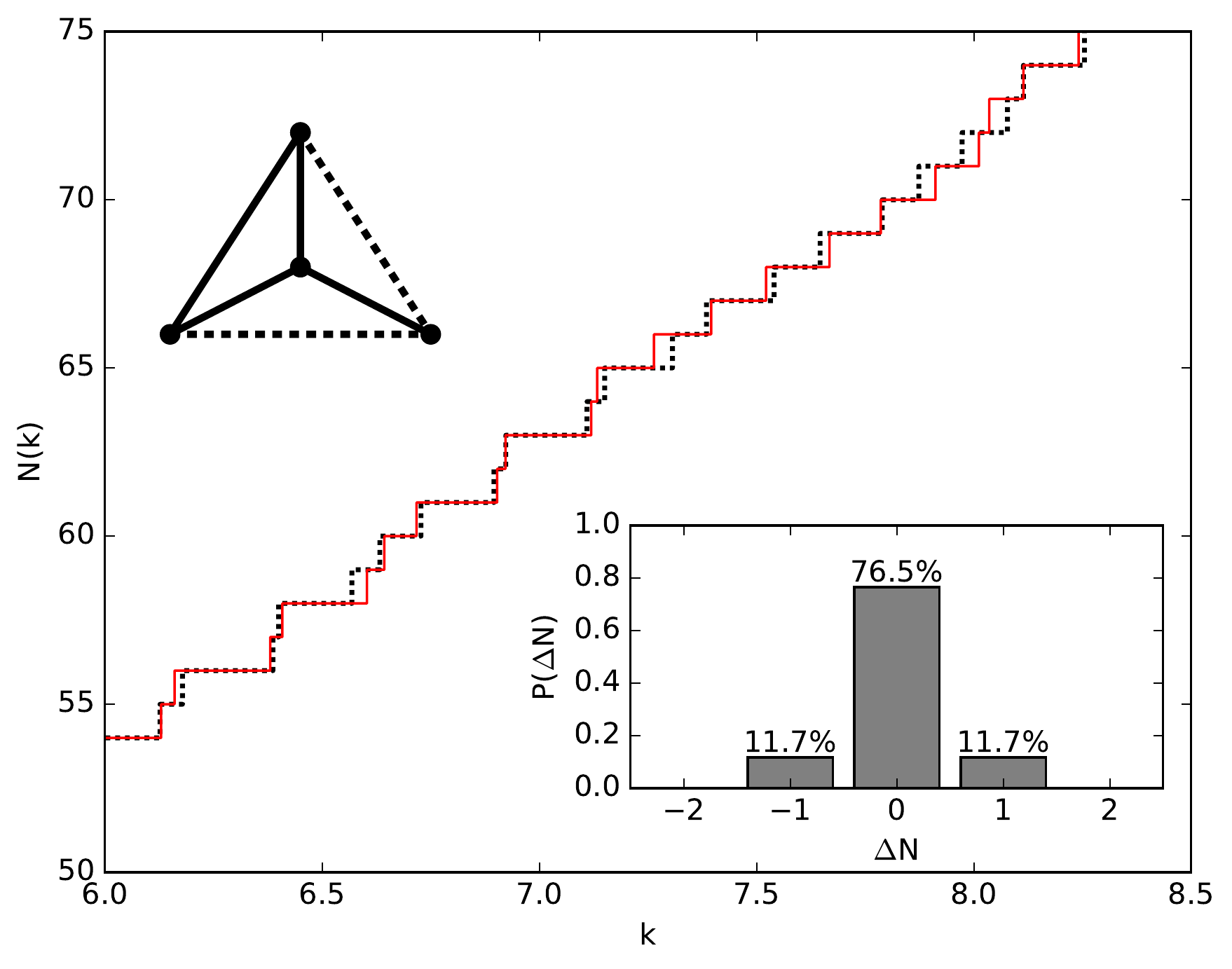} \quad 
\includegraphics[width= .45\textwidth]{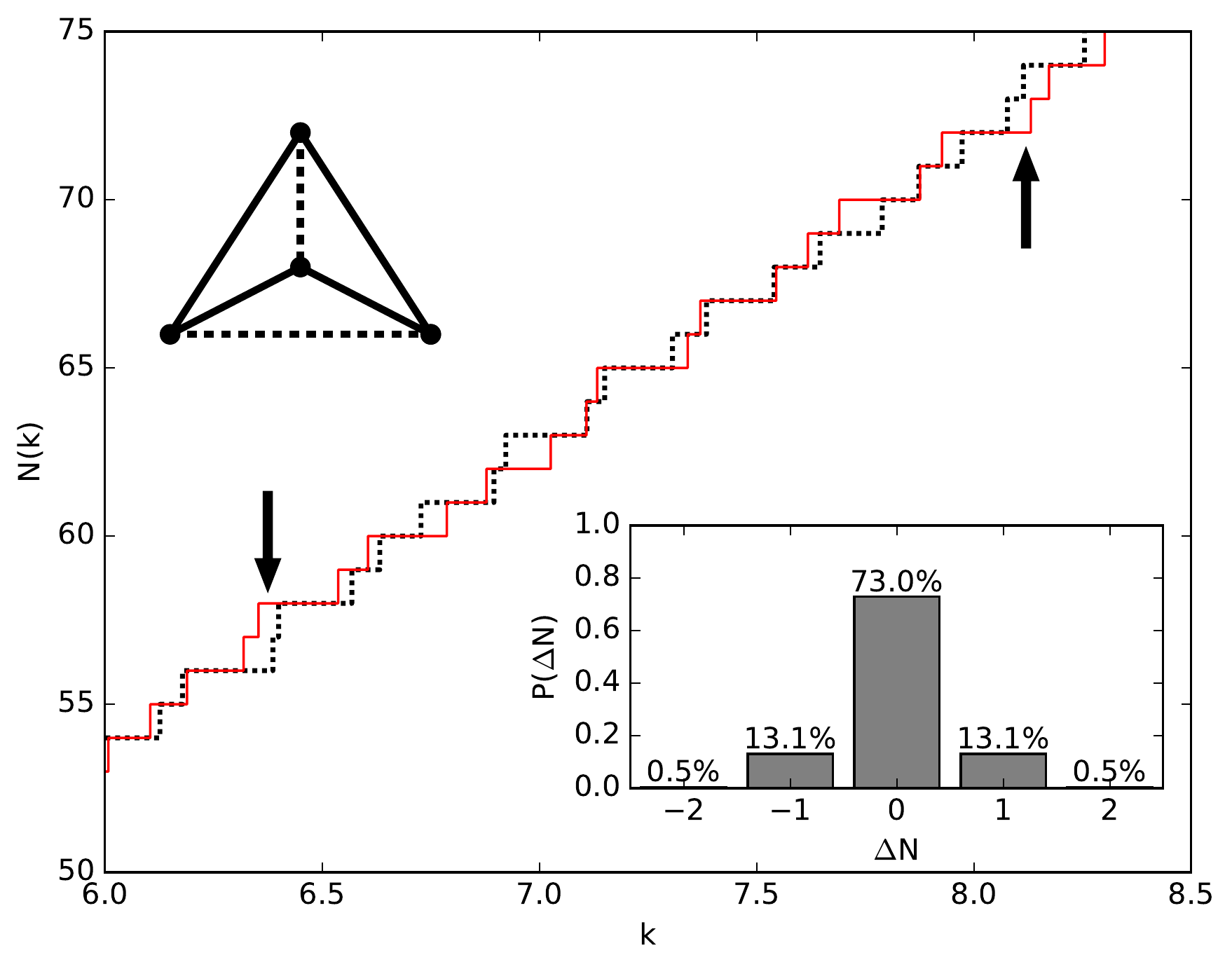}} 
\centerline{ i) \hspace{6
cm} ii)} 
\caption{Comparison of the spectral counting functions for the graph Laplacian on a
  tetrahedral graph of different edge lengths, under: i)  an edge switch and ii) edge swap.   
The insets show
  the corresponding numerical frequency distributions of the difference $\Delta N(E)$ at a randomly
  chosen energy (uniformly in $\sqrt{E}$),  computed from 10,000 spectral levels.  As is proven in this work, $|\Delta N|\le 1$ for any edge switch.  Under edge swaps our bound  $|\Delta N|\le 2$ $(r=2$ in
  \eqref{eq:r})  is attained, but not very frequently.} 
\label{fig3}
\end{figure}

- Length swapping can be easily implemented in experimental simulations of quantum graphs by networks of RF wave-guides \cite {sirko}.

\noindent {\small{\bf Note added in proof:}  After the submission of the paper it was called to our attention that an alternative protocol for the edge switch can be based on the vertex gluing operation, which was  discussed in \cite[Thm.~3.1.10]{graphs} and  in \cite[Thm.~1]{KMN13}.   Using it, an edge switch can be obtained through vertex gluing followed by un-gluing into the modified graph configuration.  Each step produces a spectral shift of at most one, but in different direction.   The spectral shift bounds presented here are not improved  by this comment, but it does offer another useful perspective.   We thank Gregory  Berkolaiko, Pavel Kurasov and  Sergey Naboko for alerting us to this point of view. } 

\section*{Acknowledgements}
We thank Chris Joyner, and Rami Band for  useful suggestions and corrections, and Mor Rozner  for  earlier numerical simulations which have guided us towards tighter results.
The work of MA was supported in part by the NSF grant DMS-1613296
and the Weston Visiting Professorship at the Weizmann Institute.  He thanks the Faculty of Mathematics and Computer Sciences and the Faculty of Physics  at WIS  for the hospitality enjoyed there.

\begin{thebibliography}{99}

\bibitem{AlbPush05} Sergio Albeverio and Konstantin Pankrashkin, \emph{A remark on Krein's resolvent formula and boundary conditions.} J. Phys A {\bf 38}, 4859-4864 (2005).

\bibitem{BB} Rami Band, Gregory Berkolaiko, Hillel Raz and Uzy Smilansky,
{\it The number of nodal domains on quantum graphs as a stability index of graph partitions}
Comm. Math. Phys. {\bf 311},815-838 (2012).

\bibitem{BMN08} Jussi Behrndt, Mark M. Malamud, and Hagen Neidhardt,  \emph{Scattering matrices and Weyl functions.} Proceedings of the London Mathematical Society  {\bf 97},  568-598 (2008).

\bibitem {graphs} Gregory Berkolaiko and Peter Kuchment, {\bf Introduction to Quantum Graphs}, AMS Mathematical Surveys and Monographs {\bf 186}  (2012).

\bibitem{Dyson} Freeman J. Dyson {\it  Statistical theory of the energy levels of complex systems I, II and III} J. Math. Phys. {\bf 3} 140 157 and 166 (1962).

\bibitem{sven} Sven Gnutzmann and Uzy Smilansky,
{\it Quantum Graphs: Applications to Quantum Chaos and Universal
Spectral Statistics.}
Advances in Physics {\bf 55} 527-625 (2006).

\bibitem{gutkin} Boris Gutkin and Uzy Smilansky,
{\it Can One Hear the Shape of a  Graph?}
  J. Phys A.{\bf 31}, 6061-6068 (2001).
  
\bibitem{sirko} Oleh Hul, Szymon Bauch, Prot Pako$\grave{n}$ski, Nazar Savytskyy, Karol Zyczkowski and Leszek Sirko, {\it Experimental simulation of quantum graphs by microwave networks}, Physical Review {\bf E 69}, 056205 (2004).

  \bibitem{LT} Dirk Hundertmark, Elliott H. Lieb , Larry E. Thomas.   {\it A sharp bound for an eigenvalue
moment of the one-dimensional Schr\"odinger operator.}  Adv. Theor. Math. Phys. {\bf 2}, 719-731 (1998).

\bibitem{Bernoulli} Christopher H. Joyner and Uzy Smilansky {\it Spectral statistics of Bernoulli matrix ensembles  random walk approach (I)} J. Phys. A: Math. Theor. {\bf 48},  255101 (2015).

\bibitem{marckac} Mark Kac, {\it Can One Hear the Shape of a Drum?},  American Mathematical Monthly, {\bf 73}, 1-23 (1966).

\bibitem{KMN13} Pavel Kurasov, Gabriela  Malenov\'a and  Sergei Naboko, \emph{Spectral gap for quantum graphs and their edge connectivity.} J. Phys. A {\bf 46}, 275309 (2013).

\bibitem{kottos1} Tsampikos Kottos and Uzy Smilansky,
{\it Quantum Chaos on Graphs}
Phys. Rev. Lett.  {\bf 79},4794- 4797, (1997).

\bibitem{kottos2} Tsampikos Kottos and Uzy Smilansky,
{\it Periodic orbit theory and spectral statistics for quantum  graphs}
Annals of Physics {\bf 274}, 76-124 (1999).

\bibitem{kuch_08} Peter Kuchment, {\it Quantum graphs: an introduction and a brief survey}, pp.291 - 314 in "Analysis on Graphs and its Applications", Proc. Symp. Pure. Math., AMS 2008.

\bibitem{Paul} Linus Pauling: {\it The Diamagnetic Anisotropy of Aromatic Molecules},
The Journal of Chemical Physics {\bf 4}, 673-677 (1936).

\bibitem{Post} Olaf Post, {\bf Spectral analysis on graph-like spaces}, Lecture Notes in Mathematics 2039, Springer 2012. 

\bibitem{Seidel} J.J. Seidel, {\it  Graphs and two-graphs}. In: Proc. Fifth Southeastern Conference on Combinatorics, Graph Theory and Computing (Boca Raton, FL., 1974), Congressus Numerantium X, Utilitas.

\bibitem{S_rank1}  Barry Simon, {\it Spectral analysis of rank one perturbations and applications},
in {\bf ``CRM Proceedings and Lecture Notes,''} Vol. 8, pp. 109-149 (1995). 

\bibitem{Weyl11} Hermann Weyl,  {\it Das asymptotische Verteilungsgesetz der Eigenwerte linearer partieller Differentialgleichungen (mit einer Anwedung auf die Theorie der Hohlraumstrahlung)}, Math. Ann. {\bf 71} 441-479 (1912) . 




\end {thebibliography}
\end{document}